\newcommand{\braket}[2]{\left< #1 \vphantom{#2} \right|
\left. #2 \vphantom{#1} \right>} 
\newlength{\textlarg} 
\begin{document}
\ShortArticleName{Equivalent sets of coherent states of the infinite square well}
\ArticleName{Equivalent sets of coherent states of the 1D infinite square well and properties}
\Author{Marc-Antoine FISET$^\dagger$ and V\'eronique HUSSIN$^\ddagger$}
\AuthorNameForHeading{M.-A. Fiset and V. Hussin}
\Address{$^\dagger$Department of Physics,
McGill University, Montr\'eal, QC H3A 2T8,
Canada}
\Address{$^\ddagger$D\'epartement de math\'ematiques et de statistique,
Universit\'e de Montr\'eal, Montr\'eal, QC H3C 3J7,
Canada}
\Email{marc-antoine.fiset@mail.mcgill.ca, veronique.hussin@umontreal.ca}
\Abstract{We prove the equivalence (under some conditions) of two sets of coherent states built for the one-dimensional infinite square well: the so-called generalized and Gaussian Klauder coherent states. We then derive an approximate close expression approaching their probability density and wave function to explore their properties analytically. This process gives thereby explanation of the quasi-classical behaviour of these states in terms of the main observables and the Heisenberg uncertainty product.}

\section{Introduction}
It is well-known \cite{Gazeau-Klauder, Gazeau} that the standard coherent states of the harmonic oscillator (HO) show many attractive properties and that not all of them can be maintained when we consider other quantum systems. A problem is thus to decide the ones that will be taken as pertinent and the ones that are more peripheral when we try to generalize these states to other quantum systems.

The standard coherent states of the HO are constructed as eigenstates of the annihilation operator. For the one-dimensional infinite square well, such eigenstates have also been built and we find a good revue of their properties in the litterature \cite{Dong}. We will call them "generalized coherent states" (GeCS) in the following.

Another type of states has been constructed in order to get a good localization in the phase space of the quantum system under consideration. They are called "Gaussian Klauder coherent states" (GCS) \cite{Fox-Choi2000}. For the HO, they are shown to be a good approximation of the standard coherent states  \cite{Fox-Choi2000}. For the infinite well, they have been analyzed most notably in \cite{Fox-Choi2000}.

In this work, we want to exhibit the relation between the above two constructions in the case of the infinite square well. We also insist on some analytical results on the behaviour of the wave function and the corresponding quantum-classical correspondence.

In Section 2, we define the two sets of coherent states we will be dealing with. Some of their well-known properties are again exhibited. A new result is that we are able to give a relation between the parameters of these states which leads to an equivalence between the two sets. In Section 3, we examine some properties of the GCS specifically. Whereas numerical results have already exposed the main features [4], our approach provides an elegant explanation for them. We manage to approximate the probability density and the GCS by close expressions, from which we will be able to deduce the behaviour of the main observables (the position and the momentum). We will examine as well the minimization of the Heisenberg uncertainty relation to establish the quantum-classical correspondence for these states. Section 4 will be devoted to some conclusions and future works.

\section{Two sets of coherent states of the 1D infinite well}

Let us first set our notational convention concerning the infinite square well  \cite{Fox-Choi2000} to be used throughout this work. A particle of mass $M$ moves in a potential taken to be
\begin{equation}
V(x)=\begin{cases}0,&0<x<L\\\infty,&\text{otherwise}.\end{cases}\nonumber
\end{equation}
The stationary eigenstates and the discrete energies of this system are
\begin{equation}
\psi_n(x)\equiv\sqrt{\frac{2}{L}}\sin{\frac{(n+1)\pi x}{L}},\quad\quad E(n)\equiv\frac{(n+1)^2\pi^2\hbar^2}{2ML^2}=\hbar\omega(n+1)^2,
\label{eigenISW}
\end{equation}
where $n=0,1,2,...$ and $\omega\equiv\frac{\pi^2\hbar}{2ML^2}$.

\subsection{Generalized coherent states}

The GeCS are usually defined as eigenstates of the annihilation operator of the quantum system under consideration. They can be used as long as the Hamiltonian $H$ of the system has a non degenerate spectrum and admits a lowest energy equal to zero. For the infinite well, we thus work with the shifted Hamiltonian $\hbar\omega\mathcal{H}\equiv H-E_0 {\mathbb I}$ instead of $H$. It has the same eigenstates as in (\ref{eigenISW}) but the eigenvalues are now 
\begin{equation}
E(n)-E(0)=\hbar \omega n(n+2)\equiv \hbar \omega \mathcal{E}(n).\nonumber 
\end{equation}

Ladder operators are chosen such that their action on the energy eigenstates is
\begin{equation}
a \psi_n(x)=\sqrt{\mathcal{E}(n)}\psi_{n-1}(x),\quad a^\dagger \psi_n(x)=\sqrt{\mathcal{E}(n+1)}\psi_{n+1}(x).\nonumber 
\end{equation}

Note that other types of GeCS have been constructed by generalizing the preceding action of the ladder operators (see \cite{Dong} for a review). They are usually called Perelomov, Barut-Girardello, Gazeau-Klauder and deformed coherent states. Indeed, we can take
\begin{equation}
A \psi_n(x)=\sqrt{n} f(n)\psi_{n-1}(x),\quad A^\dagger  \psi_n(x)=\sqrt{n+1}f(n+1)\psi_{n+1}(x), \nonumber
\end{equation}
for a positive real function $f(n)$ of the quantum number $n$. Here we limit ourselves to $f(n)=\sqrt{n+2}$ since we have essentially a factorization of the Hamiltonian of the system as for the HO:
\begin{equation}
a^\dagger a \psi_n(x)=\mathcal{E}(n) \psi_n(x)=\mathcal{H} \psi_n(x).\nonumber
\end{equation}

The main difference with respect to the HO case is that the set $\{a, a^\dagger, N\}$, where $N$ is the usual number operator ($N\psi_n(x)\equiv n\psi_n(x)$), satisfies a $su(1,1)$ algebra:
\begin{equation}
[a,N]=a,\quad [a^\dagger,N]=-a^\dagger,\quad [a,a^\dagger]=2\left(N+\frac{3}{2}\right).\nonumber 
\end{equation}

A realization of the ladder operators \cite{Dong} in terms of the position $x$, momentum $p=-i \hbar \frac{d}{dx}$ and number operators is given by ($\alpha\equiv{\frac{\pi}{L}}$):
\begin{equation}
a=\left[\cos(\alpha x)-\frac{i \sin(\alpha x)}{\hbar \alpha} p \frac{1}{N+1}\right] \sqrt{\mathcal{E}(N)},\nonumber 
\end{equation}
\begin{equation}
a^\dagger=\left[\cos(\alpha x)+\frac{i \sin(\alpha x)}{\hbar \alpha} p \frac{1}{N+1}\right] \sqrt{\mathcal{E}(N+1)}.\nonumber 
\end{equation}

The GeCS can be written as a function of the real position $x$, time $t$ and a continuous complex parameter $z$ which is the eigenvalue of the annihilation operator $a$:
\begin{equation}
\Psi_{\text{Ge}}(z;x,t)\equiv\frac{1}{\sqrt{N_\text{Ge}(z)}}\sum_{n=0}^\infty\frac{z^n}{\sqrt{\rho(n)}}e^{-i\omega \mathcal{E}(n)t}\psi_n(x),
\quad\quad \rho(n)=\begin{cases}1,&n=0,\\\prod_{i=1}^n \mathcal{E}(i),&n>0.\end{cases}\label{GeGeneral}
\end{equation}
The normalization factor is
\begin{equation}
N_\text{Ge}(z)\equiv\sum_{n=0}^\infty \frac{|z|^{2n}}{\rho(n)}.\nonumber 
\end{equation}

These states are widely used because they are a direct generalization of the HO coherent states, $\rho(n)$ being essentially the product of the shifted energies. The properties of those states are well-known \cite{Gazeau-Klauder,Gazeau,Dong}. In particular, the resolution of the identity is satisfied as well as time stability and continuity in $z$ and $t$.

We can write \eqref{GeGeneral} more succinctly as \cite{Gazeau}:
\begin{equation}
\Psi_{\text{Ge}}(z;x, t)=\sum_{n=0}^\infty C_n^{\text{Ge}}(z_0,\phi_0)e^{-i\omega n(n+2)t}\psi_n(x),\quad C_n^{\text{Ge}}(z_0,\phi_0)\equiv\frac{z_0^{n+1} e^{-in\phi_0}}{\sqrt{I_2(2 z_0)n!(n+2)!}},\label{Ge}
\end{equation}
using $z=z_0\sqrt{\hbar\omega} e^{-i\phi_0} (z_0,\phi_0\in\mathbb{R})$ and denoting by $I_2$ the second-order modified Bessel function of the first kind.

\subsection{Gaussian Klauder coherent states}

Even though they are less frequently used then the GeCS, the GCS can be built for many different systems as a special superposition of energy eigenstates in order to get a reasonably well localized probability density distribution for a short period of time \cite{Fox-Choi2000}. For real parameters $\phi_0, \ n_0\geq0$ and $\sigma_0>0$, they are defined as
\begin{equation}
\Psi_{\text{G}}(n_0,\sigma_0,\phi_0; x,t)=\sum_{n=0}^\infty C_n^{\text{G}}(n_0,\sigma_0,\phi_0)e^{-i\omega\mathcal{E}(n)t}\psi_n(x),\quad C_n^{\text{G}}(n_0,\sigma_0,\phi_0)=\frac{e^{-\frac{(n-n_0)^2}{4\sigma_0^2}-in\phi_0}}{\sqrt{N_\text{G}(n_0,\sigma_0)}},\label{G}
\end{equation}
where the normalization factor is
\begin{equation}
N_\text{G}(n_0, \sigma_0)=\sum_{n=0}^\infty e^{-\frac{(n-n_0)^2}{2\sigma_0^2}}.\nonumber 
\end{equation}

The resolution of the identity is satisfied as well as time stability and continuity in $n_0$ and $\sigma_0$ \cite{Fox-Choi2000}.

Let us stress the introduction of the same factor $e^{-in\phi_0}$ in  \eqref{Ge} and \eqref{G}. It was not included in the original construction \cite{Fox-Choi2000} but it will help making the connection between the two sets of coherent states. It also has a very simple physical interpretation as we shall show in section 3.2.

\subsection{Equivalence between the two sets of coherent states}

In order to compare our coherent states, we consider $|C_n^{\text{Ge}}(z_0,\phi_0)|^2$ as given from \eqref{Ge} and assume $z_0\gg1$. First we have:
\begin{equation}
|C_n^{\text{Ge}}(z_0,\phi_0)|^2=\frac{z_0^{2n+2}}{I_2(2z_0)n!(n+2)!}=\frac{e^{2z_0}}{I_2(2z_0)}\left[\frac{e^{-z_0}z_0^{n+1}}{(n+1)!}\right]^2\frac{n+1}{n+2}\label{Poisson}.
\end{equation}

The expression inside the bracket on the right-hand side of \eqref{Poisson} is a Poisson distribution in $(n+1)$, that can be approximated by a Gaussian distribution of mean $z_0$ and standard deviation $\sqrt{z_0}$. Looking back at \eqref{G}, we get 
\begin{equation}
|C_n^{\text{Ge}}(z_0,\phi_0)|^2\simeq\frac{e^{2z_0}N_\text{G}(z_0-1,\sqrt{z_0/2})}{2\pi z_0I_2(2z_0)}|C_n^{\text{G}}(z_0-1,\sqrt{z_0/2},\phi_0)|^2\frac{n+1}{n+2}.\label{GeApproxG}
\end{equation}

The leading behaviour of the normalization factor is straightforwardly obtained from a standard Euler-Maclaurin asymptotic expansion:
\begin{align}
\sum_{n=0}^\infty e^{-\frac{(n+1-z_0)^2}{z_0}}&\sim\int_0^\infty e^{-\frac{(n+1-z_0)^2}{z_0}}dn+\frac{1}{2}e^{-\frac{(1-z_0)^2}{z_0}}-\sum_{k=1}^\infty\frac{B_{2k}}{(2k)!}\left.\frac{d^{2k-1}}{dn^{2k-1}}\right|_{n=0} e^{-\frac{(n+1-z_0)^2}{z_0}}\nonumber \\
&=\frac{\sqrt{\pi z_0}}{2}\left[\text{erf}\left(\frac{1-z_0}{\sqrt{z_0}}\right)+1\right]+e^{-\frac{(1-z_0)^2}{z_0}}\left[\frac{1}{2}-\sum_{k=1}^\infty\frac{B_{2k}}{(2k)!}2^{2k-1}+O\left(z_0^{-1}\right)\right],\nonumber
\end{align}
where $B_{2k}$ are Bernoulli numbers. Using the identity
\begin{equation}
\frac{1}{2}\coth{\left(\frac{x}{2}\right)}-\frac{1}{x}=\sum_{k=1}^\infty\frac{B_{2k}}{(2k)!}x^{2k-1},\nonumber 
\end{equation}
valid for $0<|x|<2\pi$, and the asymptotic expansion
\begin{equation}
\text{erf}(x)\sim 1+e^{-x^2}\left[-\frac{1}{\sqrt{\pi}x}+O\left(x^{-3}\right)\right] \quad (x\rightarrow\infty),\nonumber 
\end{equation}
we find
\begin{equation}
N_\text{G}(z_0-1,\sqrt{z_0/2})\sim\sqrt{\pi z_0}+e^{-\frac{(1-z_0)^2}{z_0}}\left[\frac{1}{1-e^2}+O\left(z_0^{-1}\right)\right].\label{Nbehaviour}
\end{equation}

Approximating sums by integral in this way will be a recurring theme in this document. Our analysis will be first order and we will thus typically keep only the dominant behaviour without mentioning the corrections.

Along with the $z_0\rightarrow\infty$ expansion $I_2(2z_0)\sim e^{2z_0}/\sqrt{4\pi z_0}[1+O(z_0^{-1})]$ \cite{Arfken}, \eqref{Nbehaviour} turns \eqref{GeApproxG} into
\begin{equation}
|C_n^{\text{Ge}}(z_0,\phi_0)|^2\simeq[1+O(z_0^{-1})]|C_n^{\text{G}}(z_0-1,\sqrt{z_0/2},\phi_0)|^2\frac{n+1}{n+2}.\nonumber 
\end{equation}

Taking finally into account that only terms with $n$ close to $z_0-1$ contribute significantly (i.e. terms within a few standard deviations from the Gaussian mean), we can approximate by one the $n$-dependent ratio. Matching the phases $\phi_0$ properly, we conclude that the two sets of states are equivalent in the limit $z_0\gg1$ if the parameters are related as $n_0=z_0-1$ and $\sigma_0^2=z_0/2$. We see that there is more freedom in the GCS, where $\sigma_0$ and $n_0$ are a priori independent, than in the GeCS.

\section{Quantum-classical correspondence for the Gaussian Klauder coherent states}

As clear from any introductory Quantum Mechanics textbook, the quantum-classical correspondence of the standard HO coherent states $\Psi_{\text{HO}}(z;x,t)$ relies on two important properties. First, the main observables of these states have classical sinusoidal time dependence. Second, the Heisenberg product saturates the uncertainty relation at any time. This is as close as a quantum state can get to being classical.

These remarkable features all trace to the fact that the probability density can be written exactly
\begin{equation}
|\Psi_{\text{HO}}(z;x,t)|^2=\sqrt{\frac{m\omega}{\pi\hbar}}e^{-\frac{(x-\left\langle x \right\rangle)^2}{2\Delta_x^2}},\quad \Delta_x^2=\left\langle x^2 \right\rangle-\left\langle x \right\rangle^2.\label{HOprobdensity}
\end{equation}
for any time $t$. In this section, we examine how much of these properties survive in the case of the infinite square well GCS. The correspondence between these states and the GeCS given in section 2.3 makes our discussion applicable for both sets of states. We choose to focus on the GCS since their parameters will translate more naturally in terms of the quantum observables.

\subsection{Computed behaviour of the main observables}

Some characteristics of GCS \eqref{G} have been explored by Fox and Choi in \cite{Fox-Choi2000}. They highlighted the fact that the main observables behave quasi-classically for a short period of time before the wave packet decays. In particular, $\braket{\Psi_{\text{G}}}{x |\Psi_{\text{G}}}\equiv\left\langle x \right\rangle$ as a function of time is approximately a triangular wave, which is in good agreement with the classical back and forth motion resulting from bounces on the walls. This behaviour is shown on figure 1a (obtained by summing numerically a finite number of terms from \eqref{G}). Moreover, figure 1b shows that the average momentum $p$ is constant except at regularly spaced bounces, as we shall expect for a classical system \cite{Fox-Choi2000}. 

\begin{center}
\includegraphics[scale=0.5]{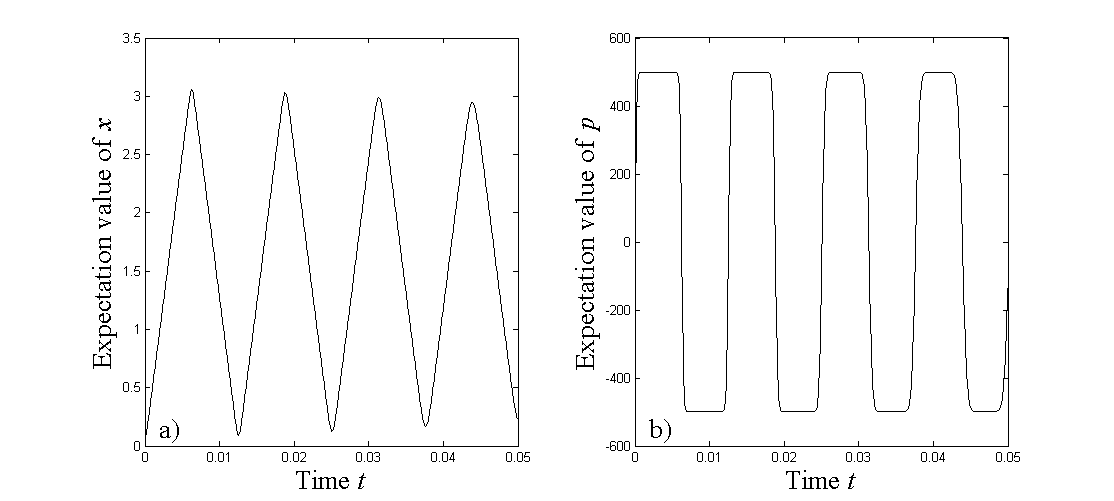}

\textbf{Figure 1} - (a) Expectation value of the position $\left\langle x \right\rangle$ and (b) momentum $\left\langle p \right\rangle$ as a function of time for $n_0=500, \sigma_0=5, \phi_0=\pi/2, L=\pi, \hbar=1$ and $M=1$.
\end{center}

We propose here another way of getting those results with a conceptual improvement. Instead of studying numerically features of $|\Psi_{\text{G}}(n_0,\sigma_0,\phi_0;x,t)|^2$, namely $\left\langle x \right\rangle$ and $\left\langle p \right\rangle$, we focus analytically on the probability density as a whole. We then obtain the behaviour of the main observables as corollaries.

\subsection{Approximate formula for the probability density and properties}

The HO coherent state probability density is a Gaussian wavepacket nicely packaging the position expectation value (see \eqref{HOprobdensity}). The next proposition shows that a similar formula holds for the infinite square well coherent states, as much as permitted by the limited domain $x\in[0,L]$. To our knowledge, it is the first time that this similarity with the HO is pointed out.

\begin{proposition}{The probability density of the infinite square well GCS is such that
\begin{equation}
|\Psi_{\text{G}}(n_0,\sigma_0,\phi_0;x,t)|^2\simeq\frac{1}{\sqrt{2\pi}s}e^{-\frac{(x-X)^2}{2s^2}}\label{probDensity}
\end{equation}
for $x\in[0,L]$, $t>0$ with
\begin{align*}
&X\equiv\frac{\phi_0L}{\pi}+\frac{Pt}{M}, \quad P\equiv\frac{(n_0+1)\pi\hbar}{L},\\
&s\equiv\frac{L}{2\pi\sigma}, \quad \sigma\equiv\sqrt{\frac{\tau}{4\omega(\tau^2+t^2)}}, \quad \tau\equiv(4\omega\sigma_0^2)^{-1}
\end{align*}
under the conditions $n_0\gg\sigma_0\gg1$, $X\gg s$, $L-X\gg s$ and $t\ll\tau$.}
\end{proposition}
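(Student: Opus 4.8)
The plan is to run the familiar Gaussian-sum/stationary-phase manipulation, which here is unusually clean because $\mathcal{E}(n)=n(n+2)=(n+1)^2-1$ is \emph{exactly} quadratic in $n$, so that — unlike in the generic Gaussian Klauder construction — no Taylor truncation of the energies is needed. First I would write $\psi_n(x)=\sqrt{2/L}\,\sin((n+1)\alpha x)$ with $\alpha=\pi/L$, split $\sin((n+1)\alpha x)=\tfrac{1}{2i}\bigl(e^{i(n+1)\alpha x}-e^{-i(n+1)\alpha x}\bigr)$, and thereby express $\Psi_{\text{G}}$ as a difference of two sums over $n$. Substituting $\mathcal{E}(n)=(n+1)^2-1$ and shifting the index to $k=n-n_0$, the $k$-dependent part of the exponent in each branch becomes $-ak^{2}+\beta_{\pm}k$, with the \emph{complex} width $a=\tfrac{1}{4\sigma_0^{2}}+i\omega t$ and purely imaginary linear coefficients $\beta_{+}=i\alpha(x-X)$, $\beta_{-}=-i\alpha(x+X)$, once one recognises that the $k$-independent pieces of the phase collapse by virtue of $\phi_0+2\omega(n_0+1)t=\alpha X$; the overall constant and the global phase factor out.

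Second, under $n_0\gg\sigma_0\gg1$ the summation over $k$ effectively runs over many standard deviations of a broad Gaussian, so I would replace it by $\int_{-\infty}^{\infty}e^{-ak^{2}+\beta_{\pm}k}\,dk=\sqrt{\pi/a}\,e^{\beta_{\pm}^{2}/(4a)}$ — the sum-to-integral step announced in the text as a recurring theme, controlled here by a Poisson-summation (Euler--Maclaurin) estimate — and in the same breath use $N_\text{G}(n_0,\sigma_0)\simeq\sqrt{2\pi}\,\sigma_0$.

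Third, I would take the modulus squared of the closed form so obtained. The two arithmetic facts doing all the work are $\mathrm{Re}(1/a)=4\sigma^{2}$ and $|a|=\omega\sqrt{\tau^{2}+t^{2}}$: the first gives $\bigl|e^{\beta_{+}^{2}/(4a)}\bigr|^{2}=e^{-\alpha^{2}(x-X)^{2}\mathrm{Re}(1/a)/2}=e^{-(x-X)^{2}/(2s^{2})}$ upon recalling $s=L/(2\pi\sigma)=1/(2\alpha\sigma)$, while the ``$-$'' branch and the cross term come out $O(e^{-X^{2}/(2s^{2})})$ and are dropped using $X\gg s$. The hypotheses $X\gg s$ and $L-X\gg s$ also license the two geometric idealisations — extending the lower summation limit past $n=0$ and discarding the images of the packet reflected at $x\le0$ and $x\ge L$ — i.e. treating the well as if it were the whole line in a neighbourhood of the packet's support. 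Finally, collecting the prefactor $\tfrac{2}{L}\cdot\tfrac{1}{4}\cdot\tfrac{\pi}{|a|}\cdot\tfrac{1}{N_\text{G}}$ and inserting $N_\text{G}\simeq\sqrt{2\pi}\,\sigma_0$, $|a|=\omega\sqrt{\tau^{2}+t^{2}}$ and $\tau=(4\omega\sigma_0^{2})^{-1}$ collapses it precisely to $1/(\sqrt{2\pi}\,s)$, which is \eqref{probDensity}.

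I expect the sum-to-integral replacement at $t>0$ to be the real obstacle. Its accuracy is governed by the effective $n$-space width $\sigma=\sigma_0/\sqrt{1+(t/\tau)^{2}}$: the Poisson corrections — physically the first fractional-revival echoes forced by the discreteness and periodicity of the spectrum — are suppressed by a factor exponentially small in $\sigma^{2}$, hence negligible as long as this width stays $\gg1$, for which the stated $t\ll\tau$ (which keeps us well before any packet spreading, so that in fact $\sigma\simeq\sigma_0$) is amply sufficient. A fully rigorous version would need uniform-in-$(x,t)$ bounds on those corrections, on the discarded ``$-$'' branch and boundary images, and on the $N_\text{G}$ asymptotics; I would also check at the outset that the four hypotheses are mutually consistent, e.g. that $\tau\ll\pi/\omega$ — automatic once $\sigma_0\gg1$ — so that one sits far below the exact revival time $t\sim\pi/\omega$ of the factors $e^{-i\omega\mathcal{E}(n)t}$.
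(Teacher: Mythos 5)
Your proposal is correct, and it reaches \eqref{probDensity} by a genuinely different route than the paper. The paper never linearizes the sines: it squares first, writes $|\Psi_{\text{G}}|^2$ as a double sum over $n,n'$, uses the product-to-sum identity to split it into a part $P_0$ carrying $\cos\frac{(n'-n)\pi x}{L}$ and a part $P_l$ carrying $\cos\frac{(n'+n+2)\pi x}{L}$, does the sum-to-integral step on the inner ($n$) sum at fixed $j=n'-n$, and then identifies the resulting Fourier cosine series term-by-term with the cosine coefficients of the even $2L$-periodic extension of the Gaussian $\Pi(X,s,0;x)$, discarding $P_l$ as negligible. You instead split $\sin((n+1)\alpha x)$ into complex exponentials inside $\Psi_{\text{G}}$ itself, perform a single complex Gaussian sum-to-integral with width $a=\tfrac{1}{4\sigma_0^2}+i\omega t$, and take the modulus squared at the end; your checks of $\mathrm{Re}(1/a)=4\sigma^2$, $|a|=\omega\sqrt{\tau^2+t^2}$ and the prefactor bookkeeping all come out right, and the identity $\phi_0+2\omega(n_0+1)t=\alpha X$ is exactly how $X$ and $P$ arise in the paper as well. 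Your ``$-$'' branch and cross term together play the role of the paper's $P_l$ (the cross term carries the spatial frequency $\sim 2(n_0+1)\alpha$, i.e.\ the fine oscillations of period $L/n_0$ that the paper attributes to $P_l$), and the $m=\pm1$ Poisson corrections you mention are precisely the reflected images at $-X$ and $2L-X$, which is where $X\gg s$ and $L-X\gg s$ genuinely enter. What each approach buys: the paper's Fourier-series matching keeps everything real, isolates the boundary-correction term $P_l$ as a separate object worth commenting on, and confines the boundary hypotheses to the explicit erf truncation in \eqref{integralF}; your single-sum complex route is more economical, makes the roles of $\tau$ and of $t\ll\tau$ transparent through the complex width $a$, and delivers Proposition 2 essentially for free, since your ``$+$'' branch before squaring is already the Gaussian packet times $e^{iPx/\hbar}$ (up to the quadratic chirp phase, which is harmless for the modulus and small for $t\ll\tau$). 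The one point to tighten is your error discussion: the Poisson corrections are suppressed as $e^{-4\pi^2m^2\sigma^2}$ only in the bulk, whereas near a wall the dominant correction is the reflected image, suppressed only by $e^{-(L-X)^2/(2s^2)}$ — so the hypotheses $X\gg s$, $L-X\gg s$ are doing real work there and are not subsumed by $\sigma\gg1$; you do invoke them, but the two suppression mechanisms should be kept distinct.
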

\begin{proof}
See appendix 1.
\end{proof}

As promised, the results of \cite{Fox-Choi2000} can be extracted from this proposition. For example, $X$, which we readily identify with $\left\langle x\right\rangle$, depends linearly on time. In other words, the particle moves with constant momentum $\left\langle p\right\rangle=M\frac{dX}{dt}=P=\frac{(n_0+1)\pi\hbar}{L}$. The period of motion inferred from this momentum and the length of the well, $\frac{2ML^2}{(n_0+1)\pi\hbar}$, also agrees with the one found from Taylor expansions of the quantized energies (see \cite{Gazeau, Aronstein-Strout}).

Because of its linear dependence, $X$ however evades the interval $[0,L]$ after a short time. The approximate formula of proposition 1 is then trivially wrong. This is a minor complication since numerical evidence suggest that, provided we account manually for the discrete bounces, our formula  still approximate correctly the wave packet. We strongly believe that it would be easy to generalize our proposition by having $X$ to be a triangular wave instead of a linear function. However, the regularly spaced moments when the packet is close to the boundaries of the well would still be badly described by an analogue of our close expression.

Unlike the time when the packet hits the wall, $\tau$ has a physical significance worth mentioning. It is an intrinsic property of the GCS that the initially highly localized wave packet decays as time evolves \cite{Fox-Choi2000}. This is reflected in the Lorentzian time-evolution of the width $s$ of the Gaussian packet as defined in the proposition. $\tau$ serves here as a typical order of magnitude for the decay process. The condition $t\ll\tau$ simply establishes our restricted attention to early instants free of this complication.

We finally see from the expression of $X$ that the parameter $\phi_0$ serves as an initial position of the wave packet. This simple interpretation justifies our introduction of that parameter in the definition of \eqref{G}. Another interesting observation we can make is that the maximum of $|\Psi_{\text{G}}(n_0,\sigma, x,t)|^2$ can be found from \eqref{probDensity}. It is close to$\frac{\sqrt{2\pi}\sigma}{L}$ at the middle of the well.

\subsection{Minimization of the uncertainty product}

One of the most important properties of the HO coherent states is the minimization of the Heisenberg uncertainty relation $\Delta\equiv\Delta x\Delta p$ at any time. Since we also noted striking quantum-classical similarities for $t\ll\tau$ for the infinite square well GCS \eqref{G}, we now want to see whether $\Delta$ would be around $\hbar/2$ in some conditions.

\begin{center}
\includegraphics[scale=0.5]{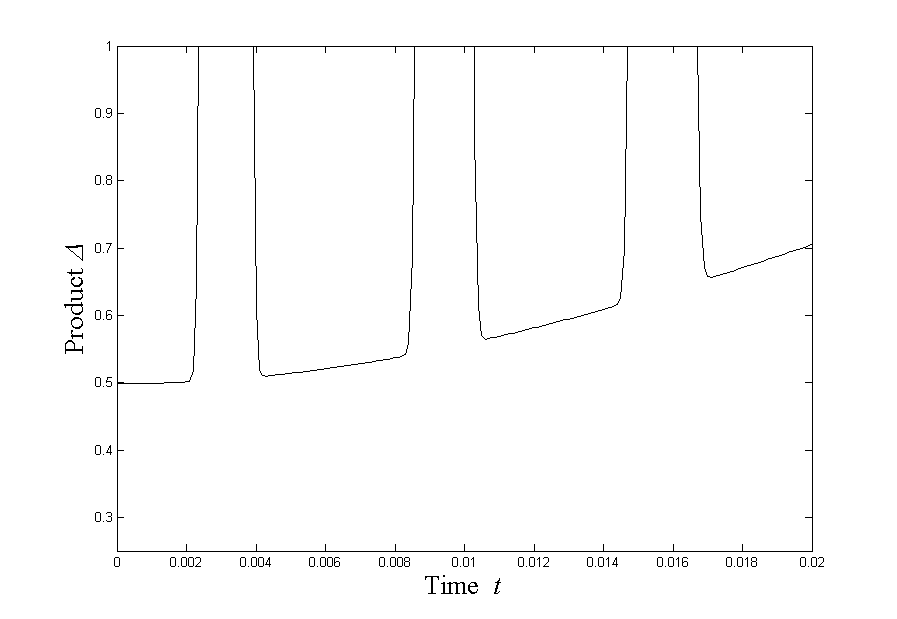}

\textbf{Figure 2} - Uncertainty product $\Delta\equiv\Delta x\Delta p$ for $n_0=50, \sigma_0=5, \phi_0=\pi/2, L=\pi, \hbar=1$ and $M=1$.
\end{center}

Figure 2 shows the numerically calculated uncertainty product for the GCS. It is easy to show from proposition 1 that the peaks of high $\Delta$ coincide with the bouncing of the particle on the walls. These moments set apart, $\Delta$ is close to one half which is the minimal possible value. This can be understood analytically as we will now see. The most general wave function that minimizes $\Delta$ being \cite{Cohen-Tannoudji}
\begin{equation}
\Psi(x,t)=A(t)e^{-\frac{(x-\left\langle x \right\rangle)^2}{4s^2}+\frac{i\left\langle p \right\rangle x}{\hbar}}\label{minimalWaveFunction},
\end{equation}
the corresponding probability density $|\Psi(x,t)|^2$ should certainly show a Gaussian dependence on $x$. Even though we have proven that the GCS have this Gaussian probability density, we know nothing about the wave function. We thus need the following result.

\begin{proposition}
The wave function of the infinite square well GCS satisfies (up to a $x$-independent phase factor)
\begin{equation}
\Psi_{\text{G}}(n_0,\sigma_0,\phi_0;x,t)\simeq\frac{1}{(\sqrt{2\pi}s)^{1/2}}e^{-\frac{(x-X)^2}{4s^2}+\frac{iPx}{\hbar}}\label{waveFunction}
\end{equation}
with the same parameters and under the same conditions as in proposition 1.
\end{proposition}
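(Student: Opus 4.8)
The plan is to attack the defining sum \eqref{G} directly, in the same spirit as the proof of Proposition~1, but now retaining all phases instead of passing to $|\Psi_{\text{G}}|^2$. First I would write $\psi_n(x)=\sqrt{2/L}\,\sin[(n+1)\alpha x]$ (with $\alpha\equiv\pi/L$ as in Section~2.1) in exponential form, $\sin[(n+1)\alpha x]=\tfrac{1}{2i}\bigl(e^{i(n+1)\alpha x}-e^{-i(n+1)\alpha x}\bigr)$, so that \eqref{G} splits into two sums labelled by a sign $\epsilon=\pm1$. The key point — the very one underlying Proposition~1 — is that $\mathcal{E}(n)=n(n+2)$ is \emph{quadratic} in $n$, so in each sum all the $n$-dependence collects into a single Gaussian factor $e^{-an^{2}+b_\epsilon n+c_\epsilon}$ with $a=\omega(\tau+it)$ (which has positive real part $\omega\tau$) and with $b_\epsilon,c_\epsilon$ respectively linear and affine in $\alpha x$. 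Since $n_0\gg\sigma_0$ the summand is negligible for $n<0$, and since $\sigma_0\gg1$ it varies slowly, so — exactly as in Section~2.3 — I would extend the summation range to all integers and replace the sum by $\int_{-\infty}^{\infty}e^{-an^{2}+b_\epsilon n+c_\epsilon}\,dn=\sqrt{\pi/a}\,e^{b_\epsilon^{2}/(4a)+c_\epsilon}$, discarding the subleading Euler--Maclaurin corrections.

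Next I would dispose of the $\epsilon=-1$ term: completing the square in its exponent exhibits it as a Gaussian in $x$ centred at $-X<0$, hence exponentially small on $[0,L]$ under the hypothesis $X\gg s$. For the surviving $\epsilon=+1$ term I would split $b_{+}^{2}/(4a)+c_{+}$ into real and imaginary parts. Using $\sigma^{2}=\tau/[4\omega(\tau^{2}+t^{2})]$ and $s=1/(2\alpha\sigma)$, the real part simplifies — complete the square in the combination $\beta\equiv\alpha x-2\omega t-\phi_0$, after which the $n_0^{2}$ pieces cancel between $b_{+}^{2}/(4a)$ and $c_{+}$ — to exactly $-(x-X)^{2}/(4s^{2})$, with the same $X$ as in Proposition~1 (one checks $\alpha X=\phi_0+2\omega(n_0+1)t$ using $\omega=\hbar\alpha^{2}/2M$ and $P=(n_0+1)\hbar\alpha$). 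The imaginary part I would organise into three pieces: a term linear in $x$ with coefficient $(n_0+1)\alpha=P/\hbar$, producing the advertised factor $e^{iPx/\hbar}$; an $x^{2}$ ("chirp") term whose magnitude across the effective support $|x-X|\lesssim s$ is $O(t/\tau)$, hence negligible under $t\ll\tau$; and purely $x$-independent phases, which are absorbed into the allowed overall phase factor. Finally the prefactor $\tfrac{1}{2i}\sqrt{2/L}\,\sqrt{\pi/a}\,/\sqrt{N_{\text{G}}}$, with $N_{\text{G}}\sim\sqrt{2\pi}\,\sigma_0$, has modulus $(\sqrt{2\pi}\,s)^{-1/2}$ — consistently with the prefactor of \eqref{probDensity} — its residual phase again being $x$-independent. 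Assembling these gives \eqref{waveFunction}.

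The main obstacle I expect is the bookkeeping of the imaginary part: one must cleanly separate the genuine plane-wave phase $Px/\hbar$ from the negligible chirp and from the discardable $x$-independent constants, and confirm that the four hypotheses $n_0\gg\sigma_0\gg1$, $X\gg s$, $L-X\gg s$, $t\ll\tau$ are precisely what suppresses every unwanted contribution (in particular $t\ll\tau$ does double duty, both where $a\approx\omega\tau$ is used and in killing the chirp, while $L-X\gg s$ keeps the retained Gaussian safely inside the well). The sum-to-integral replacement and the truncation of the Gaussian tails are routine here, being the same first-order approximations already invoked repeatedly in the paper; as there, I would not track their $O(z_0^{-1})$-type corrections explicitly.
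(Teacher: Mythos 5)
Your argument is sound, but it runs in the opposite direction to the paper's Appendix~2 and is genuinely a different proof. The paper starts from the \emph{candidate} Gaussian, Fourier-expands its odd $2L$-periodic extension in the $\sin(n\pi x/L)$ basis, computes the coefficients from the integral formula \eqref{integralF} (killing one of the two integrals via $n_0\gg\sigma_0$ and $t\ll\tau$), and then recognizes the resulting series as \eqref{G} after linearizing the quadratic phase, $\hbar\omega(n+1)(n_0+1)\simeq E(n)$ for $n$ near $n_0$. You instead evaluate \eqref{G} head-on: the quadratic $n$-dependence of $\mathcal{E}(n)$ turns each of the two exponential branches of $\psi_n$ into a complex Gaussian sum with width parameter $a=\omega(\tau+it)$, and completing the square in $x$ produces envelope, plane wave and chirp in one stroke. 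Your algebra checks out: with $\beta=\alpha x-\phi_0-2\omega t$ the real part of $b_{+}^{2}/(4a)+c_{+}$ collapses to $-\sigma^{2}(\beta-2\omega t n_0)^{2}=-(x-X)^{2}/(4s^{2})$, and the prefactor modulus is indeed $(\sqrt{2\pi}\,s)^{-1/2}$. What your route buys: the complex width makes the spreading and the $O(t/\tau)$ chirp explicit, where the paper buries the same quadratic-phase error inside its linearization of $(n+1)^2$, and the discarded $\epsilon=-1$ branch is visibly the mirror-image contribution that the paper's remarks on $P_l$ and $P_r$ only allude to. What the paper's route buys is brevity, by recycling \eqref{integralF} from Appendix~1. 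Both arguments rest on the same uncontrolled sum-to-integral replacement and Gaussian tail truncation; the one point worth making explicit in your version is that the sum-to-integral step for an oscillatory summand is really a Poisson-resummation statement, whose nonzero modes are exactly the image packets suppressed by the hypotheses $X\gg s$ and $L-X\gg s$.
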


\begin{proof}
See appendix 2.
\end{proof}

The GCS wave function then has approximately the specific Gaussian form of \eqref{minimalWaveFunction} explaining why the Heisenberg relation reaches its minimum at $t\ll\tau$. The identification of $P$ with $\left\langle p \right\rangle$ we are making here is moreover consistent with the speed of the wave packet found in section 3.2.

Proposition 2 is not really surprising given proposition 1. However, they are complementary. As the proofs show, proposition 1 gives a precise understanding of the decay of the wave packet. In particular, it gives the expression for $\sigma$. On the other hand, proposition 2 gives the extra dependence on $P$ which contributes to explain the minimization of the Heisenberg product. Both consistently exhibit the quasi-classical behaviour of the GCS.

\section{Conclusion}

In this work, we have provided a short review of two well-known coherent states built for the one-dimensional infinite square well: the generalized and Gaussian-Klauder coherent states. We gave a proof that the two sets of states were equivalent for $z_0\gg1$ if the parameters are related as $n_0=z_0-1$ and $\sigma_0^2=z_0/2$.

We then turned to the analysis of the quantum-classical correspondence properties of those states. Using an approximate close expression for the probability density \eqref{probDensity}, we readily obtained the behaviour of the observables of interest. This approach also exhibited clearly the spreading of the wave packet as a function time.

Another close expression for the wave function \eqref{waveFunction} explained the exhibited minimization of the Heisenberg uncertainty product. Both results \eqref{probDensity} and \eqref{waveFunction} were consistent with a Gaussian wave function just as in the case of the harmonic oscillator coherent states.

\section*{Acknowledgements}

This work was supported by the Natural Sciences and Engineering Research Council of Canada (NSERC).

\section*{Appendix 1}

\begin{proof}
From \eqref{G} and the explicit form of $\psi_{n}(x)$ given in \eqref{eigenISW}, the probability density takes the exact form
\begin{equation}
|\Psi_{\text{G}}(x,t)|^2=\frac{1}{N_\text{G}(n_0,\sigma_0)L}\sum_{n'=0}^\infty \sum_{n=0}^\infty e^{-\Phi(n,n';n_0,\sigma_0,\phi_0,t)}\left[\cos{\frac{(n'-n)\pi x}{L}}-\cos{\frac{(n'+n+2)\pi x}{L}}\right].\nonumber
\end{equation}
with
\begin{equation}
\Phi(n,n';n_0,\sigma_0,\phi_0,t)={\frac{(n-n_0)^2+(n'-n_0)^2}{4\sigma_0^2}+i\phi_0(n-n')+i \omega t(n(n+2)-n'(n'+2))}.\nonumber
\end{equation}

The probability density can be separated into two parts: $|\Psi_{\text{G}}(x,t)|^2=P_0 (n_0,\sigma_0,\phi_0;x,t)+P_l (n_0,\sigma_0,\phi_0;x,t)$ where $P_0$ contains the sum with $\cos{\frac{(n'-n)\pi x}{L}}$ and $P_l$ the other sum. Let us start by working out $P_0$ in details. Introducing the new summation index $j=n'-n$ and $u(j)$ that is $0$ for $j\geq0$ and $-j$ for $j<0$, we get
\begin{align}
P_0(n_0,\sigma_0,\phi_0;x,t)&=\frac{1}{N_\text{G}(n_0,\sigma_0)L}\sum_{j=-\infty}^\infty \sum_{n=u(j)}^\infty e^{-\Phi(n,n+j;n_0,\sigma_0,\phi_0,t)}\cos{\frac{j\pi x}{L}}\nonumber\\
&=\frac{1}{N_\text{G}(n_0,\sigma_0)L}\sum_{j=-\infty}^\infty e^{-\frac{j^2}{4\sigma_0^2}+ij^2\omega t+ij\phi_0}\cos{\frac{j\pi x}{L}}\sum_{n=u(j)}^\infty e^{-\frac{(n-n_0)^2+j(n-n_0)}{2\sigma_0^2}+2i\omega tj(n+1)}.\nonumber
\end{align}

If $n_0\gg1$, it makes no difference to use minus infinity in place of $u(j)$ since $e^{-\Phi(n,n+j;n_0,\sigma_0,\phi_0,t)}$ only selects terms near $j=0$ and $n=n_0$ which is far from $u(j)$. The approximation of the second sum by an integral yields the dominant behaviour
\begin{equation}
\sum_{n=-\infty}^\infty e^{-\frac{(n-n_0)^2+j(n-n_0)}{2\sigma_0^2}+2i\omega tj(n+1)}\simeq\sqrt{2\pi}\sigma_0 \ e^{\frac{j^2}{8\sigma_0^2}-2j^2\omega^2t^2\sigma_0^2-ij^2\omega t+2ij\omega t(n_0+1)}.\nonumber
\end{equation}

Hence, recycling $N_\text{G}(n_0,\sigma_0)\simeq\sqrt{2\pi}\sigma_0$ from \eqref{Nbehaviour}, we get
\begin{align}
P_0(n_0,\sigma_0,\phi_0;x,t)&\simeq\frac{1}{L}\sum_{j=-\infty}^\infty \cos{\frac{j\pi x}{L}}e^{-\frac{j^2}{8\sigma_0^2}-2j^2\omega^2\sigma_0^2t^2+2i\omega tj(n_0+1)+ij\phi_0} \nonumber\\
&=\frac{1}{L}+\frac{2}{L}\sum_{j=1}^\infty e^{-\frac{j^2}{8\sigma^2}}\cos{\frac{j\pi x}{L}}\cos{[j(\phi_0+2\omega t(n_0+1))]}.\label{P_0}
\end{align}

Note the convenient introduction of $\sigma$ as defined in the proposition. A very similar derivation for $P_l(n_0,\sigma,\phi_0;x,t)$ gives
\begin{equation}
P_l(n_0,\sigma_0,\phi_0;x,t)\simeq-\frac{e^{-2\sigma^2(\phi_0+2\omega t(n_0+1))^2}}{L}\sum_{j=0}^\infty e^{-\frac{(j-2n_0)^2}{8\sigma^2}}\cos{\frac{j\pi x}{L}}.\label{P_lSeries}
\end{equation}

Now the question is how to interpret the Fourier series \eqref{P_0} and \eqref{P_lSeries}. Let us write in such a way the even $2L$-periodic extension of the Gaussian function
\begin{equation}
\Pi(X,s,\gamma;x)\equiv\frac{1}{\sqrt{2\pi}s}e^{-\frac{(x-X)^2}{2s^2}}\cos{\gamma x}=\frac{a_0}{2}+\sum_{j=1}^\infty a_j\cos{\frac{j\pi x}{L}}\quad\quad x,x_0\in[0,L],\label{Pi}
\end{equation}
with $\gamma=0$. The $a_j$ are given by
\begin{equation}
a_j=\frac{1}{\sqrt{2\pi}sL}\int_0^L f\left(X,s,2,\frac{j\pi}{L};x\right)
+f\left(X,s,2,-\frac{j\pi}{L};x\right)dx,\label{a_j}
\end{equation}
where
\begin{equation}
f\left(X,s,\alpha,\beta;x\right)\equiv e^{-\frac{(x-X)^2}{\alpha s^2}+i\beta x}.\label{f}
\end{equation}

The integration of \eqref{f} can be carried out explicitly, but the result is simpler if
\begin{equation}
s^2|\beta|\ll\frac{2}{|\alpha|}X\quad\text{and}\quad s^2|\beta|\ll\frac{2}{|\alpha|}(L-X),\label{approxSimplificationOnF}
\end{equation}
in which case
\begin{equation}
\int_0^L f\left(X,s,\alpha,\beta;x\right)dx\simeq\frac{\sqrt{\pi\alpha}s}{2}e^{i\beta X-\frac{\alpha\beta^2s^2}{4}}\left[\text{erf}\left(\frac{L-X}{\sqrt{\alpha}s}\right)+\text{erf}\left(\frac{X}{\sqrt{\alpha}s}\right)\right].\label{integralF}
\end{equation}

\eqref{a_j} and \eqref{integralF} then yield
\begin{equation}
a_j\simeq\frac{2}{L}e^{-\frac{j^2}{8\sigma^2}}\cos{\frac{j\pi X}{L}}\quad\text{if }X\gg s\text{ and }L-X\gg s.\nonumber
\end{equation}

Comparing with \eqref{P_0}, this means $P_0(n_0,\sigma,\phi_0;x,t)\simeq \Pi(X,s,0;x)$ provided \eqref{approxSimplificationOnF} holds. A quick validation convinces that this is the case for the conditions given in the proposition. A similar development gives
\begin{equation}
P_l(n_0,\sigma_0,\phi_0;x,t)\simeq-e^{-\frac{X^2}{2s^2}}\Pi(X,s,\frac{2\pi n_0}{L};x).\nonumber
\end{equation}
but we realize that $P_l$ is actually negligible in front of $P_0$. We simply drop it to complete the proof.
\end{proof}

Some comments are in order concerning $P_l$. It is interesting to notice that it introduces the signature of fine oscillations of period $\frac{L}{n_0}$. These oscillations were already observed in \cite{Fox-Choi2000} when the wave packet is near the boundaries of the well. They allow the wave packet to get nearer of the walls than it would without deforming in that way. We see here that they arise because of $P_l$. The latter acts as a border correction on $P_0$, which embodies the dominant resemblance with a Gaussian probability density.
Strangely, the proof did not yield the border contribution
\begin{equation}
P_r(n_0,\sigma_0,\phi_0;x,t)\equiv-e^{-\frac{(L-X)^2}{2s^2}}\Pi(X,s,\frac{2\pi n_0}{L};x),\nonumber
\end{equation}
which we expect based on the obvious requirement that the solution must behave symmetrically about the middle of the well.

\section*{Appendix 2}

\begin{proof}
Let us Fourier-expand the odd $2L$-periodic extension of the Gaussian function
\begin{equation}
(\sqrt{2\pi}s)^{-1/2}f(X,s,4,\frac{P}{\hbar};x)=\sum_{n=1}^\infty b_n\sin{\frac{n\pi x}{L}}\label{FourierOdd}
\end{equation}

with $f(X,s,\alpha,\beta;x)$ as defined by \eqref{f}. The coefficient $b_n$ is given by
\begin{equation}
b_n=\frac{(\sqrt{2\pi}s)^{-1/2}}{iL}\int_0^L f(X,s,4,\frac{P}{\hbar}+\frac{n\pi x}{L};x)-f(X,s,4,\frac{P}{\hbar}-\frac{n\pi x}{L};x) dx.\nonumber
\end{equation}

The first integral is negligible for $n_0\gg \sigma_0$ and $t\ll\tau$. Refering to \eqref{integralF}, this expression becomes
\begin{equation}
b_n\simeq\frac{i(\sqrt{8\pi}s)^{1/2}}{L}e^{i\left(\frac{P}{\hbar}-\frac{n\pi}{L}\right)X-s^2\left(\frac{P}{\hbar}-\frac{n\pi}{L}\right)^2}.\nonumber
\end{equation}

Up to re-indexing the sum and dropping irrelevant phase factors, \eqref{FourierOdd} becomes
\begin{equation}
(\sqrt{2\pi}s)^{-1/2}f(X,s,4,\frac{P}{\hbar};x)=\frac{1}{(\sqrt{2\pi}\sigma)^{1/2}}\sum_{n=0}^\infty e^{-in\phi_0-i\omega t(n+1)(n_0+1)-\frac{1}{4\sigma^2}(n-n_0)^2}\psi_n(x).
\end{equation}

The similarity with \eqref{G} is now clearer. We complete the proof by using again the fact that relevant $n$'s are close to $n_0$, which leads to the identification $\hbar\omega(n+1)(n_0+1)\simeq E(n)$. The expansion \eqref{Nbehaviour} finally lead to $N_{G}(n_0,\sigma_0)\simeq\sqrt{2\pi}\sigma$ for $t\ll\tau$, which completes the proof.
\end{proof}

\LastPageEnding


\begin{thebibliography}{99}
\footnotesize\itemsep=0pt
\bibitem{Gazeau-Klauder}
Gazeau J.P. and Klauder J.R., "Coherent states for systems with discrete and continuous spectrum", {\it J. Phys. A}, {\bf 32}, 123-132 (1999). 

\bibitem{Gazeau}
Gazeau J.P., "Coherent States in Quantum Physics", Wiley-VCH, Weinheim, 2009.

\bibitem{Dong}
Dong S.H., "Factorization Method in Quantum Mechanics", Springer, Dordrecht, 2007.

\bibitem{Fox-Choi2000}
Fox R.F. and Choi M.H., "Generalized coherent states and quantum-classical correspondence", {\it Phys. Rev. A}, {\bf 61}, 032107 (2000).

\bibitem{Arfken}
Arfken G.B. and Weber H.J., "Mathematical Methods for Physicists", Elsevier Academic Press, Burlington, 2005.


\bibitem{Aronstein-Strout}
Aronstein D.L. and Strout Jr, C.R., "Fractional wave-function revivals in the infinite square well", {\it Phys. Rev. A}, {\bf 55}, 4526 (1997).

\bibitem{Cohen-Tannoudji}
Cohen-Tannoudji C., Diu B. and Laloe F., "M\'ecanique quantique I", Hermann, Paris, 1997.

\end{thebibliography}
\end{document}